\documentclass[11pt,a4paper]{article}

\usepackage[utf8]{inputenc}
\usepackage[english]{babel}
\usepackage{amsmath}
\usepackage{amsfonts}
\usepackage{amsthm}
\usepackage{amssymb}
\usepackage{graphicx}
\usepackage{mathtools}
\usepackage{parskip}
\usepackage[linkcolor=black,colorlinks=true,citecolor=black,filecolor=black]{hyperref}
\usepackage{microtype}
\usepackage{xcolor}
\usepackage{url}

\newtheorem{theorem}{Theorem}

\newtheorem{definition}[theorem]{Definition}
\newtheorem{lemma}[theorem]{Lemma}

\newtheorem{remark}[theorem]{Remark}
\newtheorem{corollary}[theorem]{Corollary}
\newtheorem{example}[theorem]{Example}

\renewcommand{\epsilon}{\varepsilon}

\newcommand{\DeclareAutoPairedDelimiter}[3]{%
  \expandafter\DeclarePairedDelimiter\csname Auto\string#1\endcsname{#2}{#3}%
  \begingroup\edef\x{\endgroup
    \noexpand\DeclareRobustCommand{\noexpand#1}{%
      \expandafter\noexpand\csname Auto\string#1\endcsname*}}%
  \x}
\DeclareAutoPairedDelimiter\ceil{\lceil}{\rceil}
\DeclareAutoPairedDelimiter\floor{\lfloor}{\rfloor}

\newcommand{\poly}{\operatorname{poly}}

\newcommand{\OPT}{\operatorname{OPT}}

\title{Quantum Space--Time Tradeoffs for TSP via Extremal Set Systems}

\author{Justin Dallant\thanks{Email: \href{mailto:justin.dallant@tu-dresden.de}{justin.dallant@tu-dresden.de}}\\
Dresden University of Technology}

\date{}

\begin{document}

\maketitle

\begin{abstract}
Recent work of Ameli, Nederlof and Wang and of Dallant and Kozma
introduced a framework for improving classical space--time tradeoffs for
the Traveling Salesman Problem (TSP) and related permutation problems via
extremal set systems with many maximal chains.  In this note we observe
that, for so called permutation problems whose outer aggregation is a minimum (such as TSP), the
same framework admits a simple quantum analogue: instead of iterating over
the covering family of set systems, we apply quantum minimum finding over the family.

More precisely, let $P_S$ denote the optimal inverse normalized chain
density among set systems of normalized size at most $S$.  Then TSP
admits a bounded-error quantum algorithm using $\widetilde O(S^n)$ QRAM
space and
\[
        \widetilde O\!\left((S\sqrt{P_S})^n\right)
\]
time.  The same argument applies to other minimization problems over permutations with a similar structure to TSP. Combining this
observation with improved extremal set-system constructions of Andoni,
Dallant, Kozma and Yu gives an explicit quantum space--time tradeoff curve, which beats the known quantum tradeoff by Caroppo et al.\ for all $1<S \leq 1.657$.
\end{abstract}

\section{Introduction}

The Traveling Salesman Problem (TSP) is one of the flagship problems in exact
exponential-time algorithms.  In its standard form, we are given a complete
weighted graph and asked to find a minimum-weight tour visiting every vertex
exactly once.  In this note it will be slightly more convenient to work with the
minimum-weight Hamiltonian-path variant with prescribed start and end vertices.
The standard tour version reduces to this version with only a polynomial
overhead, so we will continue to refer to the problem as TSP.

The classical dynamic-programming algorithm of Bellman, Held and Karp
\cite{Bellman1962,HeldKarp1962} solves TSP in $2^{n+O(\log n)}$ time and
space.  It computes, for every subset of cities and endpoint, the length of the
shortest path that starts at a fixed vertex, visits precisely that subset, and
ends at the endpoint.  Despite being more than sixty years old, the
$O^*(2^n)$ bound remains the best known for general weighted TSP.

The exponential space usage of Bellman--Held--Karp is often a more severe
bottleneck than its exponential running time.  At the other extreme, a
divide-and-conquer algorithm of Gurevich and Shelah
\cite{GurevichShelah1987} solves Hamiltonian path, and hence TSP by standard
modifications, in $4^{n+O(\log^2 n)}$ time using only polynomial space.\footnote{The runtime for this approach can be improved to $4^{n+O(\log n)}$ while keeping space polynomial via a seemingly folklore approach, explicitly described by Ameli, Nederlof and Wang in \cite[Appendix A]{AmeliNederlofWang2026}.}
Interpolating between Bellman--Held--Karp and Gurevich--Shelah yields a family
of tradeoff points with time $T^{n+o(n)}$, space $S^{n+o(n)}$, and
$ST=4$, for various values of $1\le S\le 2$.

Koivisto and Parviainen \cite{KoivistoParviainen2010} showed that the product
$ST=4$ is not a fundamental barrier: using a partition of permutations into
classes given by linear extensions of certain posets, they obtained tradeoff
points with $ST<4$, with minimum product about $3.93$.  More recently,
Ameli, Nederlof and Wang \cite{AmeliNederlofWang2026} and, independently,
Dallant and Kozma \cite{DallantKozma2026} introduced a more flexible framework
based on set systems containing many maximal chains.  In the notation used
below, their framework gives deterministic classical algorithms with space
$\widetilde O(S^n)$ and time
\[
        \widetilde O\!\left((SP_S)^n\right),
\]
where $P_S$ measures the inverse normalized chain density of the best set
system of normalized size at most $S$.  Dallant and Kozma obtain tradeoff
points with $ST<3.572$, and Ameli, Nederlof and Wang obtain $ST<3.75$ via
an equivalent chain-efficiency formulation. A follow-up work by Andoni et al.~\cite{AndoniDallantKozmaYu} describes an improved set-system
construction achieving $ST<3.182$, and show that this is the best possible in the classical set-system framework. Ameli, Nederlof and Wang also concurrently obtained $ST<3.1861$ in an updated version of their paper.

TSP has also received attention in quantum algorithms.  Ambainis et al.~\cite{AmbainisEtAl2019} showed how to combine quantum minimum finding with dynamic programming, obtaining a bounded-error quantum algorithm for TSP running
in $\widetilde O(1.728^n)$ time and using comparable QRAM space.  For more
general permutation or vertex-ordering problems, their hypercube-path framework
gives time and space $\widetilde O(1.817^n)$.  More recently, Caroppo et al.~\cite{CaroppoEtAl2026} studied quantum time--space tradeoffs for these
algorithms, motivated by the cost of QRAM.

The purpose of this note is to point out that the recent extremal-set-system
framework yields an especially direct quantum tradeoff.  The classical
set-system algorithm computes the optimum by iterating over a covering family of
about $P_S^n$ restricted Bellman--Held--Karp dynamic programs, each of size
about $S^n$.  Quantumly, we instead run quantum minimum finding over the
covering family.  This quadratically improves only the covering factor, giving
time
\[
        \widetilde O\!\left(S^n\sqrt{P_S^n}\right)
        =
        \widetilde O\!\left((S\sqrt{P_S})^n\right)
\]
and space $\widetilde O(S^n)$.  This is the main observation of the note.

\subsection*{Main result}

We call a pair $(S,T)$ a quantum space--time tradeoff point if TSP on $n$
vertices can be solved by a bounded-error quantum algorithm in time
$\widetilde O(T^n)$ using $\widetilde O(S^n)$ QRAM space.  Our main theorem
is the following.

\begin{theorem}
\label{thm:main-intro}
For every $1<S\le 2$, TSP and, more generally, bounded-degree min-type
permutation problems admit bounded-error quantum algorithms using
\[
        \widetilde O(S^n)
\]
QRAM space and
\[
        \widetilde O\!\left((S\sqrt{P_S})^n\right)
\]
time.
\end{theorem}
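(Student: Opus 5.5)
The plan is to recall the classical set-system framework of \cite{AmeliNederlofWang2026,DallantKozma2026} and isolate its two ingredients. The first is a set system $\mathcal F\subseteq 2^{[n]}$ of size at most $S^{n+o(n)}$. Each maximal chain of $\mathcal F$ corresponds to a permutation of $[n]$, so $\mathcal F$ contains a $1/P_S^{n+o(n)}$ fraction of all maximal chains of the Boolean lattice. The second is a covering family $\sigma_1,\dots,\sigma_N$ of permutations of the ground set with $N=\widetilde O(P_S^n)$, such that every permutation $\pi$ of $[n]$ (every candidate solution) is a maximal chain of $\sigma_i(\mathcal F)$ for some $i$. By symmetry, a uniformly random relabeling covers any fixed chain with probability about $P_S^{-n}$. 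So $N=\widetilde O(P_S^n)\cdot n\log n$ random relabelings cover all $n!$ permutations with positive probability, by a union bound; explicit or derandomized versions are in the cited works and I would take them as given. Since the paper allows $\widetilde O$ and $P_S$ is defined as an optimum, I would state the result for any $S$ with a system achieving $P_S$ up to $2^{o(n)}$ factors.

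Next I define the restricted dynamic program. For a fixed index $i$, let $D_i$ be the Bellman--Held--Karp table restricted to sets in $\sigma_i(\mathcal F)$. For $X\in\sigma_i(\mathcal F)$ and endpoint $v$, the entry $D_i(X,v)$ is the minimum cost of a path from the start vertex that visits exactly $X$, ends at $v$, and whose prefix sets are all in $\sigma_i(\mathcal F)$. For a bounded-degree min-type permutation problem the recurrence only looks at sets $X\setminus\{v\}$, so one pass computes $\mathrm{OPT}_i$, the minimum over permutations that are chains of $\sigma_i(\mathcal F)$. This pass takes $\widetilde O(|\mathcal F|)=\widetilde O(S^n)$ time and space, with $\mathcal F$ stored as a sorted or hashed table in QRAM. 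The covering property gives $\mathrm{OPT}=\min_i \mathrm{OPT}_i$, because each $\mathrm{OPT}_i$ is an upper bound (it is a minimum over feasible permutations) and the optimal permutation is covered by some $i$.

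The quantum step is to apply D\"urr--H{\o}yer minimum finding to the function $i\mapsto \mathrm{OPT}_i$ on $[N]$. It uses $O(\sqrt N)$ evaluations in superposition, so I need a reversible or unitary implementation of the map $|i\rangle|0\rangle\mapsto|i\rangle|\mathrm{OPT}_i\rangle$. I would build this by running the deterministic dynamic program reversibly, controlled on $i$: compute the table, copy out $\mathrm{OPT}_i$, then uncompute. The table lives in QRAM workspace of size $\widetilde O(S^n)$ and is reused across all queries. Each evaluation then costs $\widetilde O(S^n)$ time, for a total of $\widetilde O(\sqrt{P_S^n}\,S^n)$. Amplifying the success probability by repetition costs a constant factor.

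I expect the main obstacle to be this reversible, controlled evaluation without blowing up space. A naive Bennett-style simulation would keep a history and cost extra space. The fix is to note that the dynamic program processes sets level by level and each table entry is written exactly once as a function of entries already written. The forward computation is therefore a sequence of QRAM-addressed writes into fresh zero registers, and it can be uncomputed by running it in reverse, so space stays $\widetilde O(S^n)$ with only constant-factor time overhead. The membership and indexing tests (``is $\sigma_i^{-1}(X\setminus\{v\})\in\mathcal F$, and where is it stored'') must also be efficient as a function of $i$. I would handle this by indexing the table by sets of $\mathcal F$ rather than by $\sigma_i(\mathcal F)$, applying $\sigma_i$ only to vertex labels when reading edge weights. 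This makes memory addresses independent of $i$, and only the classical edge-weight lookups depend on $i$.
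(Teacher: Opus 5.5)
Your proposal is correct and follows the paper's proof. The paper uses the same ingredients: a covering family of $\widetilde O(P_S^n)$ (relabeled) set systems, a restricted Bellman--Held--Karp program computing each $\OPT_i$ in $\widetilde O(S^n)$ time and space, the identity $\OPT=\min_i \OPT_i$, and D\"urr--H{\o}yer minimum finding over the index with a reversible evaluation oracle. Your extra oracle details (write-once table entries, and relabeling the instance rather than the set system so that memory addresses do not depend on $i$) expand what the paper states only as ``can be made reversible with polynomial overhead''; the Bennett-history concern is harmless anyway, since time and space are both $\widetilde O(S^n)$, so even a history-keeping simulation stays within $\widetilde O(S^n)$ space.
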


Combining this theorem with the improved set-system constructions described in
Section~\ref{sec:adky-bound} gives an explicit tradeoff curve.  If
\[
        B(S) := S\sqrt{P_S},
\]
then quantum divide and conquer up to a specified depth $r$ gives, for every integer $r\ge 0$ with
$S^{2^r}\le 2$, the tradeoff
\[
        \left(
        S,\,
        2^{1-2^{-r}} B(S^{2^r})^{2^{-r}}
        \right).
\]
Equivalently, the best bound obtained from this method is
\[
        T(S)
        \le
        \min_{\substack{r\ge 0\\ S^{2^r}\le 2}}
        2^{1-2^{-r}}
        \left(S^{2^r}\sqrt{P_{S^{2^r}}}\right)^{2^{-r}} .
\]

\section{Preliminaries}

Note that throughout the paper we cite the results from Dallant and Kozma~\cite{DallantKozma2026}, as we work with the same vocabulary and notations. Most results cited are also shown (or implied) under a slightly different form in the work of Ameli, Nederlof and Wang~\cite{AmeliNederlofWang2026}. 

\subsection{Asymptotic notation}

We write $O^*(f(n))$ to suppress polynomial factors in $n$.  We write
$\widetilde O(f(n))$ to suppress subexponential factors of the form
$2^{o(n)}$, and often polynomial factors as well.  In particular, an
algorithm running in time $\widetilde O(T^n)$ has running time
$T^{n+o(n)}$.

\subsection{Permutation problems}
\label{sec:perm-problems}

We recall the standard definition of permutation problems, following Koivisto
and Parviainen.

\begin{definition}
Let $R$ be a semiring with addition $\oplus$ and multiplication
$\otimes$.  A permutation problem of degree $d$ over $R$ is specified by
local functions $f_1,\ldots,f_n$, and the task is to compute
\[
        \bigoplus_{\pi\in S_n} f(\pi),
\]
where
\[
        f(\pi)
        =
        \bigotimes_{j=1}^n
        f_j\!\left(
            \{\pi_1,\ldots,\pi_j\},
            \pi_{\max\{1,j-d+1\}},\ldots,\pi_j
        \right).
\]
We assume throughout that $d=O(1)$, and that the local functions and semiring
operations are computable in polynomial time.
\end{definition}

The classical set-system framework applies particularly naturally to
permutation problems over additively idempotent semirings, since duplicate
coverage of a permutation does not affect the value of the outer
$\oplus$-sum.  The quantum speedup considered in this note, however, uses
quantum minimum finding at the outer level.  It therefore applies directly only
when the outer aggregation has a search or optimization structure.

\begin{definition}
A bounded-degree permutation problem is called \emph{min-type} if its value can
be written as
\[
        \operatorname{val}(I)=\min_{\pi\in S_n} f_I(\pi),
\]
where $f_I(\pi)$ decomposes into constant-degree local terms in the usual permutation-problem sense. 

In other words, it is a permutation problem where the addition operation $\oplus$ of the associated semiring is $\min$.
\end{definition}

The TSP path version with fixed endpoints is a min-type permutation problem over
the $(\min,+)$ semiring.  Other optimization problems expressible through a
minimum over vertex orderings, such as directed Feedback Arc Set and several
graph layout problems, also fit this template.

\begin{remark}
The definition above is deliberately narrower than the general semiring
definition.  For a general semiring one would need to compute
\[
        \bigoplus_j A(j)
\]
over the members of the covering family.  Quantum minimum finding gives a
quadratic speedup for this outer step only when $\oplus$ is a minimum, or a
closely related operation such as maximum or Boolean OR.  It is not a generic
quantum primitive for evaluating arbitrary idempotent semiring sums.
\end{remark}

\subsection{Set systems framework}
\label{sec:set-systems}

A set system over $[m]$ is a family $\mathcal F\subseteq 2^{[m]}$.  A
maximal chain of $\mathcal F$ is a sequence
\[
        \emptyset = F_0 \subset F_1 \subset \cdots \subset F_m = [m],
        \qquad F_i\in \mathcal F,\quad |F_i|=i.
\]
Let $C(\mathcal F)$ denote the number of maximal chains of $\mathcal F$.

Every permutation $\pi=(\pi_1,\ldots,\pi_m)$ of $[m]$ defines a maximal
chain of the full Boolean lattice by its prefix sets
\[
        \pi^{(i)} := \{\pi_1,\ldots,\pi_i\},
        \qquad i=0,\ldots,m.
\]

\begin{definition}
A permutation $\pi\in S_m$ is supported by a set system
$\mathcal F\subseteq 2^{[m]}$ if all its prefix sets lie in $\mathcal F$;
that is,
\[
        \pi^{(i)}\in\mathcal F
        \qquad\text{for all }i=0,\ldots,m.
\]
Equivalently, the maximal chain corresponding to $\pi$ is contained in
$\mathcal F$.
\end{definition}

Following Dallant and Kozma, define the normalized size and inverse normalized
chain density of $\mathcal F$ by
\[
        S(\mathcal F)=|\mathcal F|^{1/m},
        \qquad
        P(\mathcal F)=\left(\frac{m!}{C(\mathcal F)}\right)^{1/m},
\]
with $P(\mathcal F)=+\infty$ if $C(\mathcal F)=0$.  For $1<S\le 2$, define
\[
        P_S
        =
        \inf\{P(\mathcal F): S(\mathcal F)\le S\}.
\]

The quantity $S(\mathcal F)$ measures how much dynamic-programming space is
needed when restricting to $\mathcal F$.  The quantity $P(\mathcal F)$
measures how many relabelings of $\mathcal F$ are needed, up to subexponential
factors, to cover all permutations.

We will use the following covering theorem.

\begin{theorem}[\cite{DallantKozma2026}]
\label{thm:covering}
For every $1<S\le 2$ and every sufficiently large $n$, there is a family
\[
        \mathcal F_1,\ldots,\mathcal F_q
        \subseteq 2^{[n]}
\]
such that
\[
        q \le (P_S+o(1))^n,
        \qquad
        |\mathcal F_j| \le \widetilde O\left(S^n\right)
        \quad\text{for all }j,
\]
and every permutation of $[n]$ is supported by at least one member of the
family.  Moreover, the family can be generated deterministically in an ``online'' fashion with only
subexponential overhead: given $j$, a member $\mathcal F_j$ can be constructed
within $\widetilde O\left(S^n\right)$ time and space.
\end{theorem}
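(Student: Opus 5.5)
The plan is to build the covering family as products of a single good set system on a constant number $m$ of elements. Relabelings then only need to be chosen independently inside each block, and the family is indexed by a tuple of per-block choices. This makes the construction both deterministic and online.

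\textbf{Step 1 (a good small system).} Fix $\epsilon>0$. By definition of $P_S$ as an infimum, there are $m$ and $\mathcal F\subseteq 2^{[m]}$ with $S(\mathcal F)\le S$ and $P(\mathcal F)\le P_S+\epsilon$; in particular $C(\mathcal F)>0$. Since $m$ is a constant, a covering of $S_m$ by relabelings of $\mathcal F$ can be found by brute force.

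Let $\sigma$ range over $S_m$ and let $\sigma(\mathcal F)$ denote the relabeled system. A uniformly random $\sigma$ supports a fixed permutation of $[m]$ with probability $C(\mathcal F)/m!=P(\mathcal F)^{-m}$. The standard greedy set-cover bound (or a union bound) therefore gives relabelings $\sigma_1,\dots,\sigma_{q_m}$ covering all of $S_m$, with
\[
q_m\le P(\mathcal F)^m\cdot O(m\log m).
\]

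\textbf{Step 2 (product family).} Write $n=km+r$ with $0\le r<m$. Partition $[n]$ into consecutive blocks $B_1,\dots,B_k$ of size $m$ and a remainder block $B_0$ of size $r$. For a tuple $\mathbf j=(j_1,\dots,j_k)\in[q_m]^k$, define $\mathcal F_{\mathbf j}$ as the family of all $A\subseteq[n]$ such that $A\cap B_i\in\sigma_{j_i}(\mathcal F)$ for every $i\ge1$. The remainder block is unrestricted: $A\cap B_0$ may be any subset of $B_0$, with the relabeled copy transported to $B_i$ via the order-preserving bijection.

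The key observation concerns the prefix sets of a permutation $\pi$ of $[n]$. Intersected with $B_i$, they run exactly through the prefix sets of the order that $\pi$ induces on $B_i$. Hence $\pi$ is supported by $\mathcal F_{\mathbf j}$ if and only if, for each $i$, the induced order on $B_i$ is supported by $\sigma_{j_i}(\mathcal F)$. By Step~1, each induced order is covered by some $\sigma_{j_i}$, so every $\pi$ is supported by some $\mathcal F_{\mathbf j}$.

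\textbf{Step 3 (counting and online generation).} The parameters are
\[
q=q_m^k\le\bigl(P(\mathcal F)\,(O(m\log m))^{1/m}\bigr)^n,
\qquad
|\mathcal F_{\mathbf j}|=|\mathcal F|^k 2^r\le S^n2^m.
\]
Given an index $j$, read off its base-$q_m$ digits $\mathbf j$ and enumerate the product family directly. This takes time and space $\widetilde O(S^n)$, plus a constant-time preprocessing step (Step~1).

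\textbf{Main obstacle.} The only delicate point is getting $P_S+o(1)$ rather than $P_S+\epsilon$. For this, let $\epsilon=\epsilon(n)\to0$ slowly, so that the corresponding $m(n)$ grows slowly enough that two things hold:
\begin{itemize}
\item the brute-force search over $2^{2^m}$ systems and $m!$ relabelings in Step~1 stays subexponential (e.g.\ $2^{2^m}\le n$);
\item the factor $2^m$ in the size bound stays subexponential.
\end{itemize}
The factor $(O(m\log m))^{1/m}\to1$ then also contributes only $o(1)$ to the base, which gives the stated bound $q\le(P_S+o(1))^n$.
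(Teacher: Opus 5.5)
Your argument is correct. It is the standard block-product construction behind the cited covering theorem; the note itself gives no proof and only cites Dallant--Kozma. The steps are: pick a near-optimal constant-size system, cover $S_m$ greedily by $P(\mathcal F)^m\cdot O(m\log m)$ relabelings, and take products over fixed blocks indexed by the tuple of per-block choices, which gives determinism and online generation directly. One detail should be made explicit: $(O(m\log m))^{1/m}\to 1$ requires $m(n)\to\infty$ even when the best system you find is small. You can always force this by replacing $\mathcal F$ with a self-product $\mathcal F^{\times t}$, which preserves both $S(\mathcal F)$ and $P(\mathcal F)$.
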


The following lemma follows from an easy modification of the Bellman--Held--Karp dynamic programming algorithm.

\begin{lemma}[\cite{DallantKozma2026}]
\label{lem:restricted-dp}
Let $\mathcal F\subseteq 2^{[n]}$.  Given a TSP instance on $n$ vertices,
the best TSP path among permutations supported by $\mathcal F$ can be computed
in
\[
        O^*(|\mathcal F|)
\]
time and space.  More generally, any bounded-degree min-type permutation problem
can be restricted to permutations supported by $\mathcal F$ with the same
$O^*(|\mathcal F|)$ time and space bound.
\end{lemma}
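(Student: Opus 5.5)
We prove Lemma~\ref{lem:restricted-dp} by running the Bellman--Held--Karp recurrence only over sets that lie in $\mathcal F$. For TSP with fixed start $s$ and end $t$, for each $X\in\mathcal F$ and $v\in X$ let $D[X,v]$ be the minimum weight of a path that starts at $s$, visits exactly the vertices of $X$, ends at $v$, and whose prefix sets (in visiting order) all lie in $\mathcal F$. The base case is $D[\{s\},s]=0$, provided $\{s\}\in\mathcal F$. Permutations with a different first element are excluded; note that if $\{s\}\notin\mathcal F$, no supported path starts at $s$ and the answer is $+\infty$. The recurrence is
\[
        D[X,v]=\min_{\substack{u\in X\setminus\{v\}\\ X\setminus\{v\}\in\mathcal F}} \bigl(D[X\setminus\{v\},u]+w(u,v)\bigr),
\]
with $D[X,v]=+\infty$ if no such $u$ exists. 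The output is $D[[n],t]$, which is $+\infty$ unless $[n]\in\mathcal F$.

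Correctness follows by induction on $|X|$. $D[X,v]$ is exactly the minimum over orderings $\pi$ of $X$ that end at $v$ and have every prefix in $\mathcal F$. Removing the last element of such an ordering gives an ordering of $X\setminus\{v\}$ with all prefixes in $\mathcal F$, in particular $X\setminus\{v\}\in\mathcal F$, so every supported ordering is counted. Conversely, every term in the recurrence comes from such an ordering. Taking $X=[n]$ and $v=t$, the full prefix chain of $\pi$ is required to lie in $\mathcal F$, which is exactly the definition of ``supported''. The empty prefix $\emptyset$ is also required; if $\emptyset\notin\mathcal F$ we simply output $+\infty$.

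For the complexity, we store $\mathcal F$ in a dictionary, either a hash table or a balanced search tree keyed by $n$-bit vectors. Membership queries then cost $\poly(n)$, and there are at most $n|\mathcal F|$ table entries. We process sets in order of increasing size, which we get by sorting $\mathcal F$ in $O^*(|\mathcal F|)$ time. Each entry needs $O(n)$ predecessor lookups and minimizations, so the total time and space are $O^*(|\mathcal F|)$.

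For a general min-type problem of degree $d$, we index states by $(X,v_1,\ldots,v_{d'})$ with $d'=\min\{d-1,|X|\}$, where the $v_i$ are the last $d'$ elements in order. There are at most $n^{d-1}|\mathcal F|$ such states, which is $\poly(n)\cdot|\mathcal F|$ since $d=O(1)$. The transition appends an element $v\notin X$ with $X\cup\{v\}\in\mathcal F$ and combines with $f_{|X|+1}(X\cup\{v\},v_1,\ldots,v_{d'},v)$ using $\otimes$, minimizing with $\min$. The induction argument is unchanged.

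The only real subtlety is making sure that sets outside $\mathcal F$ are never touched, so that neither time nor space depends on $2^n$. This is guaranteed because we only iterate over members of $\mathcal F$ and test predecessors by dictionary lookup.
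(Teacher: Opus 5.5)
Your proposal is correct and follows the route the paper indicates. You run the Bellman--Held--Karp recurrence only over states $(X,v)$ with $X\in\mathcal F$, or $(X,v_1,\ldots,v_{d'})$ for degree-$d$ problems, and test membership in $\mathcal F$ by dictionary lookup so that time and space stay $O^*(|\mathcal F|)$. The paper only calls this an easy modification of Bellman--Held--Karp and cites Dallant and Kozma, so your write-up, including the handling of the $\emptyset$, $\{s\}$ and $[n]$ prefixes, is a faithful expansion of that argument.
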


\subsection{Quantum model}

We work in the standard quantum circuit model augmented with QRAM.  A QRAM of
size $N$ stores $N$ words of polynomially many bits or qubits and allows
coherent access in superposition.  We assume that a QRAM access costs
$\poly(n)$ time, absorbed into the $\widetilde O(\cdot)$ notation.

Our algorithm uses QRAM as workspace for the restricted dynamic program inside
the evaluation oracle for quantum minimum finding.  Since the oracle must be
invoked coherently on superpositions of indices $j$, the restricted dynamic
program is implemented reversibly, with QRAM used as quantum read-write
workspace.  

\subsection{Quantum minimum finding}

We use the following standard primitive of Dürr and Høyer
\cite{DurrHoyer1996}.

\begin{theorem}[\cite{DurrHoyer1996}]
\label{thm:qmf}
Given oracle access to a function $A:[N]\to \mathbb R$, there is a
bounded-error quantum algorithm that finds $\min_{i\in[N]} A(i)$ using
$O(\sqrt N)$ oracle calls, up to polylogarithmic overhead.
\end{theorem}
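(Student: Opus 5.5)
The statement immediately above is Theorem~\ref{thm:qmf}, which is cited from Dürr and Høyer, so the paper takes it as given. Still, here is how I would prove it, following the standard argument.

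\textbf{Setup.} Assume without loss of generality that the values $A(i)$ are distinct, breaking ties by index. The algorithm maintains a threshold index $y$, initially chosen uniformly at random from $[N]$. It then repeats the following step: run Grover search with an unknown number of marked elements, in the exponential-schedule version of Boyer--Brassard--Høyer--Tapp, over the marked set $M_y=\{i : A(i)<A(y)\}$. The marking oracle costs $O(1)$ calls to $A$ together with a comparison against the stored value $A(y)$. Whenever the search returns some $i\in M_y$, we set $y\leftarrow i$. The whole process stops once a total budget of $c\sqrt N$ oracle calls has been spent, and we output $y$.

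\textbf{Key steps.}
\begin{enumerate}
\item \emph{Threshold dynamics.} Conditioned on success, the BBHT search returns an element of $M_y$ that is uniform over $M_y$. Consequently, the rank of the new threshold is uniform among the ranks below the current one.
\item \emph{Visiting probabilities.} From step 1, a standard ``record'' argument shows that the element of rank $r$ (rank 1 being the minimum) is ever visited as a threshold with probability exactly $1/r$.
\item \emph{Cost per visit.} When the threshold has rank $r$, we have $|M_y|=r-1$. The expected cost of a BBHT search with $t$ marked elements out of $N$ is $O(\sqrt{N/t})$.
\item \emph{Total expected cost.} Summing over ranks gives an expected total of
\[
  \sum_{r=2}^{N}\frac{1}{r}\,O\!\left(\sqrt{\frac{N}{r-1}}\right)=O(\sqrt N),
\]
since $\sum_r r^{-3/2}$ converges. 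Once the threshold is the minimum, the remaining unsuccessful searches within the budget are harmless.
\end{enumerate}

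\textbf{Conclusion.} By Markov's inequality, with budget $c\sqrt N$ for a suitable constant $c$, the threshold has reached the minimum with probability at least $1/2$. Repeating $O(\log(1/\delta))$ times and taking the best output amplifies the success probability to $1-\delta$. This accounts for the polylogarithmic overhead, as do bit-precision issues with comparisons.

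\textbf{Main obstacle.} The delicate step is step 1: showing that the element returned is uniform over $M_y$, so that the rank process has the clean $1/r$ visiting law. For this I would check carefully that BBHT, on success, measures a state whose amplitude is uniform over the marked set. This holds because Grover iterations preserve the symmetry among marked items.
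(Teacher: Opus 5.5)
Your proposal is correct and reproduces the original D\"urr--H\o{}yer argument: a random initial threshold, BBHT search over the strictly smaller elements with uniform output on success, the $1/r$ visiting-probability lemma, the convergent sum $\sum_{r\ge 2} r^{-1}\sqrt{N/(r-1)}=O(\sqrt N)$, and then Markov's inequality plus repetition. The paper gives no proof of its own and simply cites \cite{DurrHoyer1996}, so your reconstruction matches the source it relies on.
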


In our application, the oracle is deterministic after reversible simulation, so
there is no issue with oracle error.  If one uses a bounded-error evaluation
oracle, standard error reduction or modern minimum-finding variants with
erroneous oracles introduce only logarithmic or polynomial overhead.

\section{Quantum tradeoffs from extremal set systems}
\label{sec:quantum-tradeoff}

We now prove the main theorem.

\begin{theorem}
\label{thm:quantum-set-system}
For every $1<S\le 2$, TSP and, more generally, bounded-degree min-type
permutation problems admit bounded-error quantum algorithms using
\[
        \widetilde O(S^n)
\]
QRAM space and
\[
        \widetilde O\!\left((S\sqrt{P_S})^n\right)
\]
time.
\end{theorem}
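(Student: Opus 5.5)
The plan is to combine Theorem~\ref{thm:covering}, Lemma~\ref{lem:restricted-dp} and Theorem~\ref{thm:qmf} directly. Fix $1<S\le 2$ and a min-type permutation problem (TSP being the main case). By Theorem~\ref{thm:covering} there are $q\le (P_S+o(1))^n$ set systems $\mathcal F_1,\ldots,\mathcal F_q\subseteq 2^{[n]}$, each of size $\widetilde O(S^n)$, that together support every permutation. Define
\[
        A(j) := \min\{ f_I(\pi) : \pi \text{ supported by } \mathcal F_j\}.
\]
The covering property gives $\min_{j\in[q]} A(j)=\min_{\pi\in S_n} f_I(\pi)=\operatorname{val}(I)$. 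Every permutation is supported by at least one $\mathcal F_j$, and every permutation counted in some $A(j)$ is a genuine permutation. Since $\min$ is idempotent, duplicated coverage does no harm.

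The next step is to implement an oracle $|j\rangle|0\rangle\mapsto|j\rangle|A(j)\rangle$ within $\widetilde O(S^n)$ time and space. Classically, on input $j$ we build $\mathcal F_j$ in $\widetilde O(S^n)$ time and space using the online generation guaranteed by Theorem~\ref{thm:covering}. We then run the restricted Bellman--Held--Karp program of Lemma~\ref{lem:restricted-dp} in $O^*(|\mathcal F_j|)=\widetilde O(S^n)$ time and space. This classical procedure is made reversible and run coherently on the index register, with QRAM as the read--write workspace for the DP tables and for the description of $\mathcal F_j$. Afterwards we copy out $A(j)$ and uncompute, so the workspace returns to its initial state and the oracle is clean.

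Applying Theorem~\ref{thm:qmf} with $N=q$ then uses $O(\sqrt q)=(\sqrt{P_S}+o(1))^n$ oracle calls, with polylogarithmic overhead. Each call costs $\widetilde O(S^n)$, and the QRAM workspace is reused across calls, so the total time is $\widetilde O((S\sqrt{P_S})^n)$, the space is $\widetilde O(S^n)$, and the error is bounded. The factor $(P_S+o(1))^{n/2}$ equals $P_S^{n/2}2^{o(n)}$ because $P_S\ge 1$ is a constant, so it is absorbed into $\widetilde O$. To output an optimal tour rather than only its value, we can recover the minimizing index $j$ and then rerun the classical restricted DP with backpointers.

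The main obstacle I expect is making the oracle reversible without blowing up time or space. A naive Bennett-style simulation of an $\widetilde O(S^n)$-time computation could need more space than $\widetilde O(S^n)$. Here, however, the computation is a DP whose space already matches its time up to polynomial factors. So the plan is to record every written table entry, which costs $\widetilde O(S^n)$ space, and then uncompute in reverse order. The construction of $\mathcal F_j$ from $j$ needs the same treatment, and the time must be independent of $j$, or padded to a uniform bound, so that the coherent execution is well defined. Indexing $\mathcal F_j$ is handled by storing its members in a QRAM-addressable table and doing lookups by sorted search, at polynomial overhead per access. These costs are all subexponential factors.
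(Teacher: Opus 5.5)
Your proposal is correct and follows the same route as the paper. You define $A(j)$ as the restricted optimum over permutations supported by $\mathcal F_j$, use the covering property to get $\OPT=\min_j A(j)$, implement $j\mapsto A(j)$ as a reversible oracle with $\widetilde O(S^n)$ time and space via Lemma~\ref{lem:restricted-dp}, and run quantum minimum finding (Theorem~\ref{thm:qmf}) over the $q\le (P_S+o(1))^n$ indices. Your extra detail on generating $\mathcal F_j$ inside the oracle and on why the reversible simulation stays within $\widetilde O(S^n)$ space only spells out what the paper asserts in one line.
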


\begin{proof}
Fix $1<S\le 2$.  By Theorem~\ref{thm:covering}, for every sufficiently large
$n$ there is a family
\[
        \mathcal F_1,\ldots,\mathcal F_q
\]
of set systems over $[n]$ such that
\[
        q\le (P_S+o(1))^n,
        \qquad
        |\mathcal F_j|\le \widetilde O\left(S^n\right)
\]
for every $j$, and every permutation of $[n]$ is supported by at least one
$\mathcal F_j$.

For a set system $\mathcal F_j$, let $A(j)$ be the value of the best TSP
path whose corresponding permutation is supported by $\mathcal F_j$.  By
Lemma~\ref{lem:restricted-dp}, the value $A(j)$ can be computed
deterministically in time and space
\[
        \widetilde O(S^n).
\]
This computation can be made reversible with only polynomial overhead, giving a
valid quantum evaluation oracle for $A(j)$.

Because every permutation is supported by at least one member of the covering
family, the global optimum is
\[
        \OPT
        =
        \min_{1\le j\le q} A(j).
\]
Indeed, every solution considered by some $A(j)$ is a genuine TSP solution, so
$\min_j A(j)\ge \OPT$.  Conversely, an optimal permutation is supported by at
least one $\mathcal F_j$, so $\min_j A(j)\le \OPT$.

We compute this minimum using quantum minimum finding over $[q]$.  The number
of oracle calls is
\[
        \widetilde O(\sqrt q)
        =
        \widetilde O\!\left((\sqrt{P_S})^n\right).
\]
Each oracle call costs $\widetilde O(S^n)$ time and uses
$\widetilde O(S^n)$ workspace.  The total running time is therefore
\[
        \widetilde O\!\left(S^n\sqrt q\right)
        =
        \widetilde O\!\left((S\sqrt{P_S})^n\right),
\]
while the QRAM space remains $\widetilde O(S^n)$, up to polynomial overhead.

The same proof applies to any bounded-degree min-type permutation problem: for
each $j$, define $A(j)$ to be the minimum value among solutions supported by
$\mathcal F_j$.  Since the covering family supports every permutation, the
global optimum is again $\min_j A(j)$, and quantum minimum finding applies.
\end{proof}

\begin{remark}
The theorem should not be interpreted as applying to arbitrary permutation
problems over additively idempotent semirings.  In the classical set-system
framework one can compute
\[
        \bigoplus_{j=1}^q A(j),
\]
and idempotence ensures that duplicate coverage of a permutation is harmless.
The quantum argument above instead replaces the outer loop by quantum minimum
finding, which is valid for a minimum, maximum, or search-type outer operation,
but not for a generic semiring addition $\oplus$.
\end{remark}

\begin{remark}
The theorem is the direct quantum analogue of the classical set-system tradeoff.
Classically, one pays for $P_S^n$ restricted dynamic programs, giving time
$(SP_S)^n$.  Quantumly, minimum finding reduces the covering factor from
$P_S^n$ to $\sqrt{P_S^n}$, giving time $(S\sqrt{P_S})^n$.
\end{remark}

\section{Fractalization}
\label{sec:fractalization}

The tradeoff of Theorem~\ref{thm:quantum-set-system} can be improved at small
space values using what Caroppo et al.\ call quantum fractalization (in more explicit terms this a single quantumized divide and conquer step).  We recall the result in the form needed here.

\begin{theorem}[\cite{CaroppoEtAl2026}]
\label{thm:fractalization}
Suppose $(S,T)$ is a feasible quantum space--time tradeoff point for TSP.
Then
\[
        \left(\sqrt S,\sqrt{2T}\right)
\]
is also a feasible quantum space--time tradeoff point.

More generally, the same implication holds for min-type permutation problems
that admit the standard balanced split recurrence: after guessing the first half
of the permutation and $O(1)$ boundary data, the two halves can be solved
independently and combined with only polynomial overhead.
\end{theorem}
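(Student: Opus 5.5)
The plan is to prove Theorem~\ref{thm:fractalization} by a single quantum divide-and-conquer step. For TSP with fixed endpoints $s,t$ on vertex set $V$ with $|V|=n$, the first step is the balanced split identity. Let $m=\lfloor (n-2)/2\rfloor$, and let $(X,v)$ range over sets $X\subseteq V\setminus\{s,t\}$ with $|X|=m$ and middle vertices $v\in X$. Then
\[
\OPT(s,t,V)=\min_{(X,v)}\Bigl(\OPT\bigl(s,v,X\cup\{s\}\bigr)+\OPT\bigl(v,t,(V\setminus X)\cup\{v\}\bigr)\Bigr).
\]
Correctness is direct. Every candidate on the right is a genuine Hamiltonian $s$–$t$ path, obtained by concatenating the two halves at $v$, so the right-hand side is at least $\OPT$. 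Conversely, cutting an optimal path after its first $m+1$ vertices gives a pair $(X,v)$ achieving $\OPT$.

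The number of pairs $(X,v)$ is at most $n\binom{n}{n/2}=2^{n+O(\log n)}$. Each half is a TSP instance on about $n/2$ vertices.

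For a general min-type permutation problem of degree $d$, the same decomposition is exactly the hypothesis of the theorem. One guesses the prefix set and the last $d-1$ elements of the first half, which is $O(1)$ boundary data costing only a polynomial factor. The second half's local functions $f_j$ then depend on the full prefix only through $X$ plus its own prefix, so they define a permutation problem on the remaining elements.

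Next, I apply quantum minimum finding (Theorem~\ref{thm:qmf}) over the index set of pairs, of size $N=2^{n+o(n)}$. The oracle for index $(X,v)$ evaluates the two subproblems using the assumed $(S,T)$ algorithm on $n/2$ vertices. That costs time $\widetilde O(T^{n/2})=\widetilde O((\sqrt T)^n)$ and space $\widetilde O(S^{n/2})=\widetilde O((\sqrt S)^n)$. The total time is then $\sqrt N\cdot T^{n/2}=\widetilde O((\sqrt{2}\sqrt T)^n)=\widetilde O((\sqrt{2T})^n)$. The space is that of one oracle evaluation plus polynomially many qubits for the index and the minimum-finding registers, so it is $\widetilde O((\sqrt S)^n)$.

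The main obstacle is that the inner algorithm is only bounded-error and possibly non-reversible, while minimum finding needs a coherent oracle. I would handle this in two moves.

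First, I amplify the inner algorithm's success probability to $1-2^{-\Omega(n)}$ by running it $O(n)$ times and taking the majority or the minimum of the reported values. Intermediate results are uncomputed each round, so the cost is a polynomial time factor and the QRAM workspace is reused. The amplified algorithm is then made unitary by the standard deferred-measurement and uncompute technique, giving a coherent oracle whose output state is $2^{-\Omega(n)}$-close to the ideal one.

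Second, the Dürr–Høyer procedure makes only $2^{O(n)}$ oracle calls, with exponent $n/2$. By a hybrid argument, the accumulated error is at most $2^{n/2}\cdot 2^{-\Omega(n)}$, which stays small once the repetition count is chosen as a large enough multiple of $n$. The overall success probability therefore remains bounded away from $1/2$.

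Finally, the $o(n)$ slack in the exponents of $\widetilde O$ for instances of size $n/2$ translates to $o(n)$ slack at size $n$, which gives the claimed tradeoff point $(\sqrt S,\sqrt{2T})$.
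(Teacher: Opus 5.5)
This is essentially the paper's argument: quantum minimum finding over the $\approx\binom{n}{n/2}$ balanced splits plus polynomially many boundary choices, with the assumed $(S,T)$ algorithm run on the two halves of size about $n/2$ inside the oracle, giving time $2^{n/2}T^{n/2}$ and space $S^{n/2}$ up to $2^{o(n)}$ factors. Your amplification-plus-hybrid-argument treatment of the bounded-error inner oracle spells out what the paper's sketch leaves to standard error reduction; amplify there by majority or median rather than the minimum, since the minimum is only safe if the inner algorithm errs one-sidedly.

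One slip needs fixing. Since $X\subseteq V\setminus\{s,t\}$, the set $(V\setminus X)\cup\{v\}$ still contains $s$, so as written the concatenated path revisits $s$ and the displayed identity is false. The second term should be $\OPT\bigl(v,t,(V\setminus(X\cup\{s\}))\cup\{v\}\bigr)$.
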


\begin{proof}[Proof sketch]
For TSP, use the usual meet-in-the-middle decomposition as used by Gurevich and Shelah, but in a quantumized way.  Quantum minimum
finding ranges over all subsets $X\subseteq[n]$ of size $n/2$, together with
the polynomially many boundary choices specifying how the two halves are joined.
For each guessed split, solve the two induced subinstances using the assumed
$(S,T)$-algorithm on input size $n/2$, and combine their values.

The quantum minimum-finding factor is
\[
        \sqrt{\binom{n}{n/2}}
        =
        \widetilde O\left(2^{n/2}\right).
\]
The two recursive calls together cost $\widetilde O\left(T^{n/2}\right)$ time and use
$\widetilde O\left(S^{n/2}\right)$ space.  Hence the total time is
\[
        \widetilde O\left(2^{n/2} T^{n/2}\right)
        =
        \widetilde O\left((\sqrt{2T})^{n}\right),
\]
and the space is
\[
        \widetilde O\left(S^{n/2}\right)
        =
        \widetilde O\left((\sqrt S)^{n}\right).
\]
The same proof applies to min-type permutation problems more generally.
\end{proof}

Iterating Theorem~\ref{thm:fractalization} gives an explicit formula.

\begin{corollary}[\cite{CaroppoEtAl2026}]
\label{cor:iterated-fractalization}
If $(S_0,T_0)$ is feasible, then for every integer $r\ge 0$,
\[
        \left(
            S_0^{2^{-r}},
            2^{1-2^{-r}} T_0^{2^{-r}}
        \right)
\]
is feasible.
\end{corollary}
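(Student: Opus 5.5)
The plan is to argue by induction on $r$, applying Theorem~\ref{thm:fractalization} once per step. Set
\[
        S_r := S_0^{2^{-r}},
        \qquad
        T_r := 2^{1-2^{-r}}\, T_0^{2^{-r}} .
\]
The claim is that $(S_r,T_r)$ is feasible for every integer $r\ge 0$.

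For the base case $r=0$, we have $2^{1-1}=1$, $S_0^{1}=S_0$ and $T_0^{1}=T_0$. So the point is just the assumed feasible pair $(S_0,T_0)$.

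For the inductive step, suppose $(S_r,T_r)$ is feasible. Theorem~\ref{thm:fractalization} then gives feasibility of $(\sqrt{S_r},\sqrt{2T_r})$, and it remains to check that this is $(S_{r+1},T_{r+1})$. For space, $\sqrt{S_r}=S_0^{2^{-r}/2}=S_0^{2^{-(r+1)}}$. For time,
\[
        \sqrt{2T_r}
        =
        2^{1/2}\cdot 2^{(1-2^{-r})/2}\, T_0^{2^{-(r+1)}}
        =
        2^{1-2^{-(r+1)}}\, T_0^{2^{-(r+1)}},
\]
using $\tfrac12+\tfrac12-2^{-(r+1)}=1-2^{-(r+1)}$. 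This closes the induction.

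The only point needing care is the hidden subexponential factors when Theorem~\ref{thm:fractalization} is applied repeatedly. Since $r$ is a fixed constant independent of $n$, at most $r$ nested applications occur. Each application turns a $2^{o(n)}$ overhead into $2^{o(n/2)}\cdot\poly(n)=2^{o(n)}$. Hence a constant number of compositions stays within the $\widetilde O(\cdot)$ notation, and the induction is uniform in $n$ for each fixed $r$.
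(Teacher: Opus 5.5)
Your induction is correct and is exactly the iteration of Theorem~\ref{thm:fractalization} that the paper invokes; the paper gives no further proof. Your exponent check $\tfrac12+\tfrac12(1-2^{-r})=1-2^{-(r+1)}$ is the only computation needed, and your remark on subexponential factors is harmless but not required: feasibility already absorbs $2^{o(n)}$ factors and the theorem maps feasible points to feasible points, so finite induction on $r$ suffices.
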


Combining Corollary~\ref{cor:iterated-fractalization} with
Theorem~\ref{thm:quantum-set-system} yields the following general bound.

\begin{corollary}
\label{cor:fractalized-set-system}
For every $1<S\le 2$, TSP admits a bounded-error quantum algorithm using
$\widetilde O(S^n)$ QRAM space and $\widetilde O(T(S)^n)$ time, where
\[
        T(S)
        \le
        \min_{\substack{r\ge 0\\ S^{2^r}\le 2}}
        2^{1-2^{-r}}
        \left(
            S^{2^r}\sqrt{P_{S^{2^r}}}
        \right)^{2^{-r}} .
\]
The same statement holds for bounded-degree min-type permutation problems
satisfying the restricted-DP and balanced-split assumptions.
\end{corollary}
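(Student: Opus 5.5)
The plan is to fix $1<S\le 2$ and an integer $r\ge 0$ with $S^{2^r}\le 2$, build a base tradeoff point at space $S_0:=S^{2^r}$, and then fractalize it $r$ times down to space $S$. Because the bound in Corollary~\ref{cor:fractalized-set-system} is a minimum over finitely many admissible $r$ (finitely many since $S>1$), it suffices to show that each admissible $r$ gives a feasible point. We then run the algorithm attached to the minimizing $r$.

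\textbf{Base point.} Since $1<S_0\le 2$, Theorem~\ref{thm:quantum-set-system} applies at space parameter $S_0$. It shows that $(S_0,T_0)$ with $T_0:=S_0\sqrt{P_{S_0}}$ is a feasible quantum space--time tradeoff point for TSP. For a general min-type problem this uses the restricted-DP assumption, that is, Lemma~\ref{lem:restricted-dp}.

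\textbf{Fractalization.} Apply Corollary~\ref{cor:iterated-fractalization} to $(S_0,T_0)$ with this $r$. This yields the feasible point
\[
        \left(S_0^{2^{-r}},\,2^{1-2^{-r}}T_0^{2^{-r}}\right)
        =
        \left(S,\,2^{1-2^{-r}}\left(S^{2^r}\sqrt{P_{S^{2^r}}}\right)^{2^{-r}}\right).
\]
For min-type problems this step requires the balanced-split hypothesis of Theorem~\ref{thm:fractalization}, which is exactly what the corollary's statement assumes. Taking the best admissible $r$ gives the claimed bound on $T(S)$.

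\textbf{Main obstacle.} The one point needing care is that the $\widetilde O$ notation survives composition. Each of the $r$ fractalization levels works on instances of size $n/2^k$. The subexponential factors $2^{o(n/2^k)}$ and the polynomial overheads from minimum finding and from reversibility multiply across levels. Because $r$ is fixed independently of $n$, this product is still $2^{o(n)}$.

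A second, smaller point is that $n/2^r$ need not be an integer. I would handle this by padding the instance with dummy vertices so that $n$ becomes a multiple of $2^r$, which changes $n$ by only $O(1)$. Alternatively, splits with sizes $\lfloor n/2\rfloor$ and $\lceil n/2\rceil$ could be used, at a cost of at most $2^{O(1)}$ per level.

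Finally, feasibility for the recursion requires that Theorem~\ref{thm:covering} hold for all sufficiently large input sizes. Since $n/2^r\to\infty$, the base algorithm is valid for all large $n$.
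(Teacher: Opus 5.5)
Your proposal is correct and matches the paper's argument: it takes the base point $\bigl(S^{2^r},\,S^{2^r}\sqrt{P_{S^{2^r}}}\bigr)$ from Theorem~\ref{thm:quantum-set-system} (valid because $1<S^{2^r}\le 2$), applies Corollary~\ref{cor:iterated-fractalization} with the same $r$, and uses the best of the finitely many admissible $r$. Your remarks on why the subexponential overheads compose for fixed $r$, and on non-integral $n/2^r$, are details the paper leaves implicit; for TSP the $\lfloor n/2\rfloor/\lceil n/2\rceil$ split is the cleaner fix, since dummy vertices would have to be weighted carefully to avoid creating shortcuts.
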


\section{Explicit bounds from improved set systems}
\label{sec:adky-bound}

We now plug in the improved extremal set-system construction described by  Andoni et al.~\cite{AndoniDallantKozmaYu}.

\begin{theorem}[\cite{AndoniDallantKozmaYu}]
\label{thm:adky-interpolated}
Let $1<S\le 2$, let
\[
        k=\left\lfloor \frac{1}{\log_2 S}\right\rfloor,
        \qquad
        \lambda=k(k+1)\log_2 S-k,
\]
and let $I>0.654865$.  Then
\[
        P_S
        \le
        k^{\lambda}(k+1)^{1-\lambda}
        2^{-I(1-\log_2 S)}.
\]
\end{theorem}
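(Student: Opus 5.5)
The plan is to reduce the statement to its values at the ``breakpoints'' $S=2^{1/k}$ and then interpolate log-linearly between consecutive breakpoints using a product construction of set systems.

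\textbf{Breakpoints.} Write $s=\log_2 S$, so $1/(k+1)<s\le 1/k$. At $s=1/k$ we get $\lambda=1$, and at $s=1/(k+1)$ we get $\lambda=0$. The claimed bound therefore reads
\[
P_{2^{1/k}}\le k\,2^{-I(1-1/k)}\qquad\text{for every } I>0.654865
\]
at the breakpoints. I would take this as the core extremal construction of Andoni et al.~\cite{AndoniDallantKozmaYu}, assumed here rather than reproved. It is the genuinely hard part: a set system of normalized size $2^{1/k}$ built from $k$-uniform-like layers, with a correction term giving the $2^{-I(1-1/k)}$ gain. Note also that $\log_2$ of the right-hand side equals
\[
\lambda\log_2 k+(1-\lambda)\log_2(k+1)-I(1-s),
\]
and since $\lambda$ is affine in $s$, this whole expression is affine in $s$ on $[1/(k+1),1/k]$. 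So it suffices to show that $s\mapsto\log_2 P_{2^s}$ lies below the chord joining the two breakpoint values.

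\textbf{Product lemma.} Let $\mathcal F_1\subseteq 2^{[m_1]}$ and $\mathcal F_2\subseteq 2^{A}$ with $A$ a disjoint ground set of size $m_2$. Define
\[
\mathcal F=\{X\cup Y: X\in\mathcal F_1,\;Y\in\mathcal F_2\}
\]
on $m=m_1+m_2$ elements. Then $|\mathcal F|\le|\mathcal F_1||\mathcal F_2|$. Moreover, every interleaving of a maximal chain of $\mathcal F_1$ with a maximal chain of $\mathcal F_2$ gives a maximal chain of $\mathcal F$, so
\[
C(\mathcal F)\ge C(\mathcal F_1)\,C(\mathcal F_2)\binom{m}{m_1}.
\]
Hence $m!/C(\mathcal F)\le (m_1!/C(\mathcal F_1))\,(m_2!/C(\mathcal F_2))$, that is,
\[
S(\mathcal F)^m\le S(\mathcal F_1)^{m_1}S(\mathcal F_2)^{m_2},\qquad P(\mathcal F)^m\le P(\mathcal F_1)^{m_1}P(\mathcal F_2)^{m_2}.
\]
So with weight $\alpha=m_1/m$, the pair $(\log S,\log P)$ of the product is at most the $\alpha$-convex combination of the two pairs.

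\textbf{Applying the lemma.} Take near-optimal systems at $s_1=1/k$ and $s_2=1/(k+1)$, replicated via products to any large sizes, and combine them with weight $\alpha$. Then
\[
\alpha/k+(1-\alpha)/(k+1)=(\alpha+k)/(k(k+1)),
\]
which equals $s$ exactly when $\alpha=k(k+1)s-k=\lambda$. The resulting $\log_2 P$ is the $\lambda$-combination of the breakpoint bounds, and the $I(1-s)$ term combines affinely in the same way, which gives the stated formula.

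\textbf{Main technical nuisance.} $\lambda$ need not be rational, and the breakpoint bounds hold only asymptotically. I would handle both using the strict inequality $I>0.654865$. Pick $I'$ with $0.654865<I'<I$ and apply the breakpoint bound with $I'$. Approximate $\lambda$ from below by a rational $m_1/m$; this keeps the normalized size at most $S$. Use that $P_S$ is nonincreasing in $S$. Absorb both the approximation error and the $o(1)$ losses into the slack $2^{-(I'-I)(1-s)}$, which is a positive constant margin because $s<1$ whenever $k\ge 2$. The edge case $k=1$, i.e.\ $S$ near $2$, needs separate care: there $1-s$ may vanish, and I would treat it by a direct limit argument together with the trivial bound $P_2=1$.
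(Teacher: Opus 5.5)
\textbf{The breakpoint bound is assumed, and it carries all the content.} The paper gives no proof of this statement; it imports it from \cite{AndoniDallantKozmaYu}. Judged on its own, the interpolation you actually prove is sound. The product lemma shows that $s\mapsto\log_2 P_{2^s}$ is convex. Your check is also correct: the weight $\alpha=\lambda$ lands exactly at $s$ and turns the two breakpoint values into $\lambda\log_2 k+(1-\lambda)\log_2(k+1)-I(1-s)$.

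But the inequality $P_{2^{1/k}}\le k\,2^{-I(1-1/k)}$ that you take as a black box is not an auxiliary lemma. It is exactly the $\lambda=1$ case of the statement itself. It requires an extremal construction beating the trivial $k$-block system, which only gives $P_{2^{1/k}}\le k$, together with the definition and numerical estimate of the constant $I$. So what you have is a reduction of the theorem to its own breakpoint instances, not a proof. The missing ingredient is precisely that construction.

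\textbf{Your treatment of $I$ is also confused.} Read literally as ``for every $I>0.654865$'', the statement (and your breakpoint input) is false whenever $S<2$. The right-hand side tends to $0$ as $I\to\infty$, whereas $P(\mathcal F)\ge 1$ always since $C(\mathcal F)\le m!$. So $I$ must be understood as the fixed constant of \cite{AndoniDallantKozmaYu}, and there is no $I'$ to choose. Even on your reading, the slack points the wrong way. With $I'<I$ the input bound is weaker, so the factor $2^{(I-I')(1-s)}>1$ is a loss, not a margin.

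Fortunately no slack is needed. Since $P_S$ is an infimum over set systems on all ground-set sizes, you can do the following:
\begin{enumerate}
\item take breakpoint systems within $\varepsilon$ of optimal;
\item form products with rational weights $\alpha\le\lambda$, which keep the normalized size at most $S$;
\item let $\varepsilon\to0$ and $\alpha\to\lambda$.
\end{enumerate}
This yields $\log_2 P_S\le\lambda\log_2 P_{2^{1/k}}+(1-\lambda)\log_2 P_{2^{1/(k+1)}}$ with no loss at all. For the same reason $k=1$ is not an edge case. $P_2=1$ is attained by the full Boolean lattice and matches the formula exactly at $S=2$.
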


Define the explicit upper bound
\[
        \widehat P(S)
        =
        k^{\lambda}(k+1)^{1-\lambda}
        2^{-I(1-\log_2 S)},
\]
with $k,\lambda$ as in Theorem~\ref{thm:adky-interpolated}.  Then
Corollary~\ref{cor:fractalized-set-system} gives the following fully explicit
tradeoff.

\begin{corollary}
\label{cor:explicit}
For every $1<S\le 2$, TSP admits a bounded-error quantum algorithm using
$\widetilde O(S^n)$ QRAM space and $\widetilde O(\widehat T(S)^n)$ time,
where
\[
        \widehat T(S)
        =
        \min_{\substack{r\ge 0\\ S^{2^r}\le 2}}
        2^{1-2^{-r}}
        \left(
            S^{2^r}\sqrt{\widehat P(S^{2^r})}
        \right)^{2^{-r}} .
\]
The same bound holds for bounded-degree min-type permutation problems.
\end{corollary}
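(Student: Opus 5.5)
The plan is to obtain Corollary~\ref{cor:explicit} by combining Theorem~\ref{thm:adky-interpolated} with Corollary~\ref{cor:fractalized-set-system}. The only new ingredient is a monotonicity argument showing that an upper bound on $P_{S'}$ can be substituted for $P_{S'}$.

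First fix $1<S\le 2$ and an integer $r\ge 0$ with $S':=S^{2^r}\le 2$. Since $S>1$, also $S'>1$, so Theorem~\ref{thm:adky-interpolated} applies at $S'$ and gives $P_{S'}\le \widehat P(S')$. Here $\widehat P$ is defined with some fixed admissible $I>0.654865$; any such choice works.

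Next I would re-run the chain of results directly at $S'$ rather than appealing to the inequality form of Corollary~\ref{cor:fractalized-set-system}. Theorem~\ref{thm:quantum-set-system} at space parameter $S'$ yields the feasible point $(S',\,S'\sqrt{P_{S'}})$. Because feasibility is monotone in time (a $\widetilde O(T^n)$ algorithm is also a $\widetilde O(T'^n)$ algorithm for every $T'\ge T$), the point $(S',\,S'\sqrt{\widehat P(S')})$ is feasible as well. Corollary~\ref{cor:iterated-fractalization} with $r$ levels then gives the feasible point
\[
\left(S,\;2^{1-2^{-r}}\bigl(S'\sqrt{\widehat P(S')}\bigr)^{2^{-r}}\right),
\]
where the space coordinate is $S'^{2^{-r}}=S$.

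Finally, the index set of admissible $r$ is finite and nonempty: it contains $r=0$, and $S^{2^r}\le 2$ fails for large $r$ since $S>1$. So the minimum in the definition of $\widehat T(S)$ is attained at some $r$, and running the algorithm for that $r$ gives the claimed bound. For bounded-degree min-type permutation problems the argument is identical, using the general forms of Theorem~\ref{thm:quantum-set-system} and Theorem~\ref{thm:fractalization}. As in Corollary~\ref{cor:fractalized-set-system}, this inherits the balanced-split assumption whenever $r\ge 1$.

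The step needing the most care is the feasibility bookkeeping rather than any calculation. The hidden subexponential factors ($P_{S'}+o(1)$ in Theorem~\ref{thm:covering}, and the $2^{o(n)}$ overheads compounding over $r$ fractalization levels) must remain $2^{o(n)}$. This holds because the chosen $r$ is a constant depending only on $S$, not on $n$. A second point to check is that Theorem~\ref{thm:adky-interpolated} is used only as an upper bound, which is legitimate precisely because of the time-monotonicity of feasibility.
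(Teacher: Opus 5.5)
Your proposal is correct and is essentially the paper's argument: the paper substitutes $P_{S'}\le\widehat P(S')$ from Theorem~\ref{thm:adky-interpolated} into Corollary~\ref{cor:fractalized-set-system}, and your route through Theorem~\ref{thm:quantum-set-system}, time-monotonicity and Corollary~\ref{cor:iterated-fractalization} is exactly what that corollary packages. One small caution: $I$ must be read as the specific constant from \cite{AndoniDallantKozmaYu}, or its certified lower bound $0.654865$, and not as an arbitrary number above $0.654865$. A larger $I$ makes $\widehat P$ smaller; for example, at $S=\sqrt2$ any $I>2$ would give $\widehat P(S)<1\le P_S$, so your remark that ``any such choice works'' is false.
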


\begin{example}
At the discrete points $S=2^{1/k}$, Theorem~\ref{thm:adky-interpolated} gives the
unfractalized bound
\[
        T
        \le
        2^{1/k}
        \sqrt{k}\,
        2^{-\frac{k-1}{2k}I}.
\]
For instance, at $S=\sqrt 2$ this gives
\[
        T
        \le
        \sqrt 2\cdot \sqrt{2\cdot 2^{-I/2}}
        =
        2\cdot 2^{-I/4}.
\]
Using $I>0.654865$, this is below $1.786$.
\end{example}

\section{Comparison with previous quantum tradeoffs}
\label{sec:comparison}

\begin{figure}
    \centering
    \includegraphics[width=\linewidth]{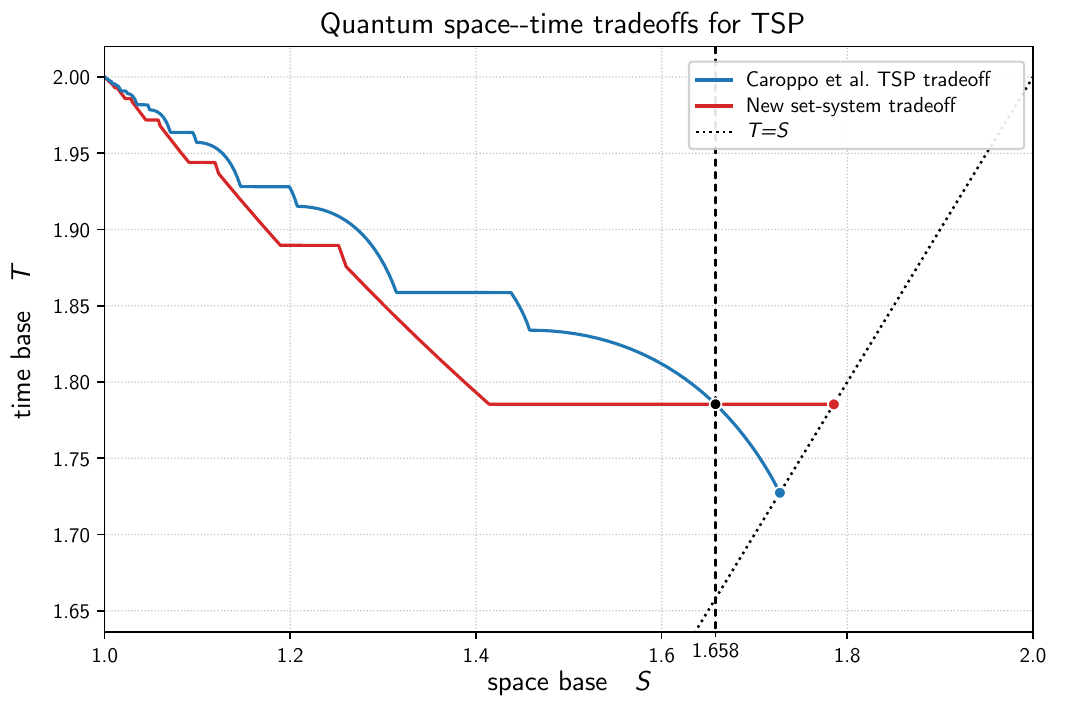}
    \caption{Comparison of quantum space--time tradeoff curves for TSP.}
    \label{fig:tradeoff_curves}
\end{figure}

Ambainis et al.\ obtained quantum dynamic-programming algorithms for TSP in
$\widetilde O(1.728^n)$ time and space, and for general vertex-ordering
problems in $\widetilde O(1.817^n)$ time and space
\cite{AmbainisEtAl2019}.  Caroppo et al.\ subsequently studied quantum
time--space tradeoffs for these algorithms, motivated by the cost of QRAM~\cite{CaroppoEtAl2026}.

For divide-and-conquer problems, including TSP in their framework, Caroppo et
al. obtain a tradeoff with
\[
        T\in[2/S^{0.268},\,2/S^{0.201}],
\]
depending on the space range and the chosen fractalized point.  For their more general
hypercube-path framework for permutation or vertex-ordering problems, they
obtain
\[
        T\in[2/S^{0.161},\,2/S^{0.099}],
\]
and their quantum pairwise-scheme tradeoff gives
\[
        T\in[2/S^{0.151},\,2/S^{0.088}].
\]

The set-system tradeoff of this note has a different form.  Its base curve is
\[
        T_{\mathrm{base}}(S)=S\sqrt{P_S},
\]
and the fractalized curve is the lower envelope
\[
        T(S)
        \le
        \min_{\substack{r\ge 0\\ S^{2^r}\le 2}}
        2^{1-2^{-r}}
        \left(
            S^{2^r}\sqrt{P_{S^{2^r}}}
        \right)^{2^{-r}} .
\]
Thus the comparison depends entirely on the available upper bound on $P_S$.
With the current best bounds of \cite{AndoniDallantKozmaYu}, the set-system
curve improves on the previous TSP tradeoff for a substantial intermediate
range of the QRAM space parameter (more specifically, for all space bases $1 < S \leq 1.657$). See Figure \ref{fig:tradeoff_curves} for a visual comparison. Note that when plotting the tradeoff curves, we take their monotone closure, as a feasible tradeoff point $(S_0,T_0)$ implies $(S,T_0)$ is also a feasible tradeoff point for all $S>S_0$.

\section{Concluding remarks}

The observation in this note is deliberately simple: the set-system framework
separates the problem into many restricted dynamic programs, and quantum minimum
finding gives a quadratic speedup in the number of restrictions.

It would be interesting to understand whether the
set-system approach can be combined more tightly with the specialized
Ambainis-style quantum dynamic programs for TSP, rather than using quantum
minimum finding only at the outer recursion level. Also, for general
semiring-valued permutation problems (including counting variants), the present
quantum argument does not give a generic quadratic speedup: one would need a
quantum primitive for the relevant outer $\oplus$-aggregation, or a different
way to exploit the set-system cover.

\bibliography{references}

\end{document}